\newtheorem{theorem}{Theorem}
\newtheorem{aim}[theorem]{Aim}
\newtheorem{assumption}[theorem]{Assumption}
\newtheorem{corollary}{Corollary}[theorem]
\newtheorem{remark}[theorem]{Remark}
\newcommand{\yfHighlight}[1]{\textcolor{black}{#1}}
\title{\LARGE \bf
Collision-free Motion Planning for Mobile Robots by Zero-order Robust Optimization-based MPC}
\author{Yunfan Gao$^{1, 2}$\quad Florian Messerer$^{2}$\quad Jonathan Frey$^{2, 3}$\quad Niels van Duijkeren$^1$\quad Moritz Diehl$^{2,3}$
\thanks{$^{1}$ Robert Bosch GmbH, Corporate Research, Stuttgart, Germany
	{\tt\small \{yunfan.gao, niels.vanduijkeren\}@de.bosch.com}}
\thanks{$^{2}$ Department of Microsystems Engineering (IMTEK), University of Freiburg, Germany
	{\tt\small \{florian.messerer, jonathan.frey, moritz.diehl\}@imtek.uni-freiburg.de}}
\thanks{$^{3}$ Department of Mathematics, University of Freiburg, Germany}
\thanks{The research that led to this paper was funded by Robert Bosch GmbH. This work was also supported by 
           DFG via Research Unit FOR 2401 and project 424107692 on Robust MPC and by the EU via ELO-X 953348.
        }
}
\begin{document}

\maketitle
\thispagestyle{empty}
\pagestyle{empty}

%%%%%%%%%%%%%%%%%%%%%%%%%%%%%%%%%%%%%%%%%%%%%%%%%%%%%%%%%%%%%%%%%%%%%%%%%%%%%%%%
\begin{abstract}
This paper presents an implementation of robust model predictive control (MPC) for collision-free reference trajectory tracking for mobile robots. 
The presented approach considers the robot motion to be subject to process noise bounded by ellipsoidal sets.
In order to efficiently handle the evolution of the disturbance ellipsoids within the MPC,
  the zero-order robust optimization (zoRO) scheme is applied~\cite{Zanelli2021a}.
The idea is to fix the disturbance ellipsoids within one optimization iteration
  and solve the problem repeatedly with updated disturbance ellipsoid trajectories.
The zero-order approach is suboptimal in general. 
  However, we show that it does not impair convergence to the reference trajectory 
    in the absence of obstacles. 
The experiments on an industrial mobile robot prototype demonstrate the performance of the controller.

\end{abstract}

%%%%%%%%%%%%%%%%%%%%%%%%%%%%%%%%%%%%%%%%%%%%%%%%%%%%%%%%%%%%%%%%%%%%%%%%%%%%%%%%
\section{INTRODUCTION}

Model predictive control (MPC) has received increasing attention in the field planning and control for mobile robots~\cite{Berntorp2017,Roesmann2021,limon2021tracking}.
It predicts robot trajectories and optimizes robot motion based on an objective function with constraint satisfaction.
Although impressive results have been obtained using MPC, 
  the constraint satisfaction is not robust against disturbances.
In general, as optimal solutions are often at the edge of constraints, 
  small disturbances can lead to constraint violation.
It is especially problematic in robot motion planning since violations of anti-collision constraints may impair safety.

Robust MPC explicitly models and propagates disturbances~\cite{MAYNE2005219}.
Under the assumption that the disturbance is bounded, 
  robust MPC places tubes containing all possible disturbed states around the nominal trajectory.
The constraints are tightened such that all realizations within the tube fulfill the nominal constraints.

Solving robust MPC problems efficiently and non-conservatively is challenging.
Ellipsoidal sets are often used to approximate the sets of all possible disturbed states~\cite{Houska2011}.
The drawback is that the tube-based approach typically introduces a number of variables quadratic in the number of nominal system states, 
  thus significantly increasing the computational demand. 
The zero-order algorithm and the adjoint inexact algorithm were proposed~\cite{Zanelli2021a, feng2020inexact}
   to efficiently solve the robust MPC problem with ellipsoidal sets.
It reduces the computational complexity % of solving a robust MPC problem 
  by keeping the ellipsoidal sets fixed in an optimization iteration and 
    updating the ellipsoidal sets afterwards.
As some optimization variables are fixed, 
  the necessary optimality conditions change and 
    the solutions of the zero-order variant are suboptimal in general~\cite{Zanelli2021a}.

\begin{figure*}[t]
	\centering
	\vspace{2pt}
	\begin{minipage}[t]{0.82\textwidth}
		\centering
		\includegraphics[width=\textwidth]{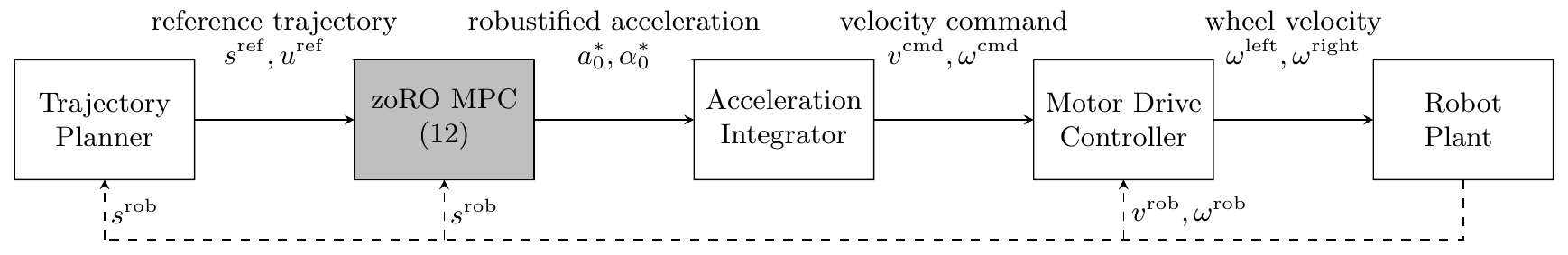}
		\caption{Diagram of the control pipeline. 
			The original reference trajectory $s^{\mathrm{ref}}, u^{\mathrm{ref}}$ is robustified by the zero-order robust optimization (zoRO) MPC. 
			The integrator integrates the robustified acceleration $a^*_0, \alpha^*_0$
			and outputs the velocity commands $v^{\mathrm{cmd}}, \omega^{\mathrm{cmd}}$. 
			The differential drive controller computes the angular velocities of the robot wheels to be tracked by the motor controllers.
			The robot state $s^{\mathrm{rob}}$ is fed back to close the loop.
		}
		\label{fig:control_pipeline}
	\end{minipage}
	\hspace{0.01\textwidth}
	\begin{minipage}[t]{0.15\textwidth}
		\centering
		\includegraphics[width=0.85\columnwidth]{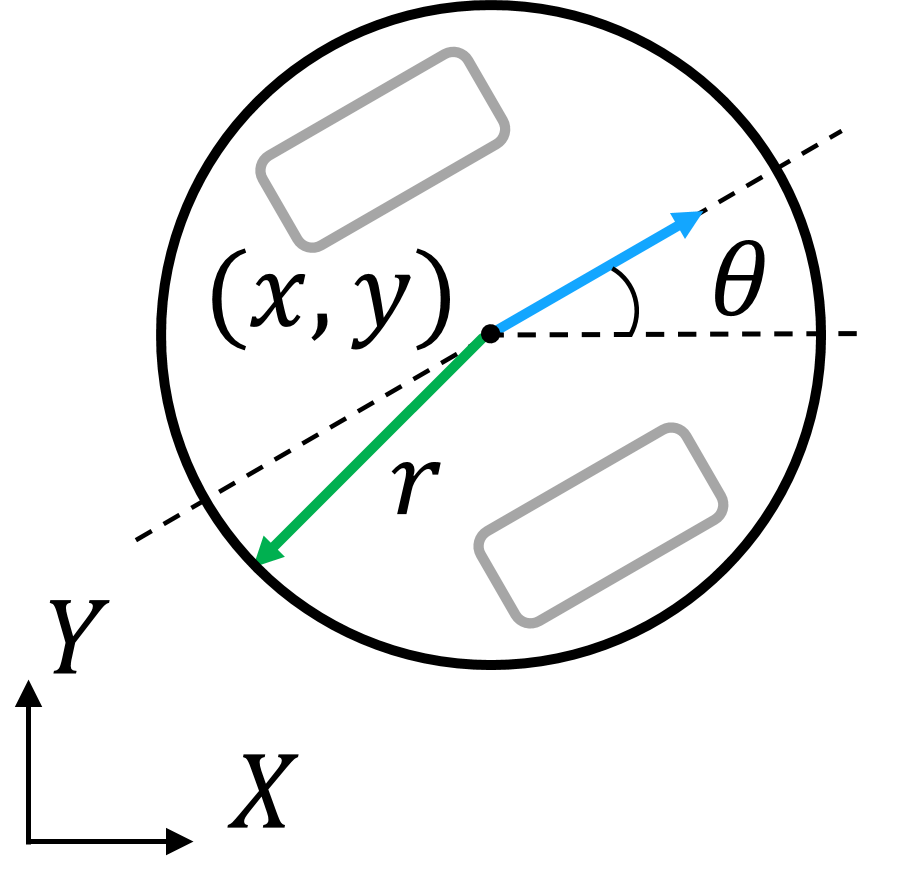}
		\caption{Illustration of a circular differential robot centered at $(x, y)$ with a radius of $r$.}
		\label{fig:differential_drive}
	\end{minipage}
\end{figure*}

This paper focuses on robust MPC-based collision-free reference tracking.
The zero-order robust optimization (zoRO) MPC is applied to track a reference trajectory by a higher-level planner as depicted in Fig.~\ref{fig:control_pipeline}.
The reference trajectory satisfies the system dynamics, 
  but it is not necessarily robustly collision-free.
The robust MPC modifies motion commands when necessary and computes robustified inputs.
The acceleration inputs are then integrated to obtain the velocity commands for the robot.
With regard to obstacles, we assume that they are circular and static.
Full knowledge of the positions and the radii of the obstacles is assumed to be available.
The experiments on a real mobile robot show the efficacy of our approach.

Moreover, we study the suboptimality properties of the zoRO MPC.
For the robots that have a linear input subsystem and are subject to affine constraints on the subsystem, 
  the zoRO algorithm is locally optimal when the robots are sufficiently far away from the obstacles. 
Consequently, the convergence to the reference trajectory is not impaired. %in the absence of obstacles.

\subsubsection*{Related Work}
A robust nonlinear MPC framework for lane keeping and obstacle avoidance of ground vehicles is proposed in~\cite{Gao2014}.
The authors compute robust positively invariant (RPI) sets offline
  and tighten the constraints to ensure constraint satisfaction in the presence of disturbances.
Furthermore, many researchers have investigated robustifying collision avoidance in the robust MPC framework utilizing growing tubes~\cite{8796049,8202163}.
The growth of the tubes can be counteracted by incorporating linear feedback laws.
These feedback laws can be optimized~\cite{Messerer2021, NAGY2004411} or precomputed~\cite{MayneTube2011}.
\looseness=-1

\subsubsection*{Notation}
The paper applies to sampled-data systems.
However, for the methods and analysis, we consider a discrete-time setting.
For brevity of notation, for a trajectory $f\colon \mathbb{R}_+\to \mathbb{R}^n$ we write $f_k := f(t_k)$.
An ellipsoidal set $\mathcal{E}$ centered at the origin can be defined by 
  a symmetric positive definite shape matrix $Q$,
$\mathcal{E}(Q)\coloneqq \{y\in\mathbb{R}^n|y^\top Q^{-1}y\leq 1\}$.
A set of natural numbers possibly containing 0 in the interval $\left[a, b \right]\subset\mathbb{N}_0 $ is denoted by $\mathcal{I}_{\left[a, b \right]}$.
The quadratic norm with respect to a symmetric positive definite matrix $Q$ is denoted by $\|x\|^2_Q=x^\top Qx$.

\subsubsection*{Structure}
Section~\ref{sec:problem_statement} states the problem to be addressed in this paper.
Section~\ref{sec:preliminary} briefly recapitulates robust MPC and in particular zero-order robust optimization (zoRO) MPC.
Section~\ref{sec:subopt} studies and numerically validates the optimality of the zoRO MPC in the absence of obstacles.
The implementation details and the real-world experiments are presented in Section~\ref{sec:experiment}.
Section~\ref{sec:conclusion} concludes the paper.
\section{Problem Statement}
\label{sec:problem_statement}

The problem statement consists of three ingredients: uncertain robot system dynamics, collision avoidance, and reference trajectory tracking.

\subsubsection*{Robot system dynamics}
Consider a circular differential robot with radius $r$  (Fig.~\ref{fig:differential_drive}).
The robot system state is 
$$s = \left[ \begin{array}{ccccc}
x & y & \theta & v &\omega
\end{array}\right]^\top \in\mathbb{R}^{n_s},$$ 
with $(x,y)$ denoting the robot center position and $\theta$ the heading angle in some reference inertial frame.
$v$ and $\omega$ denote the forward velocity and angular velocity of the robot.
The system inputs $u\in\mathbb{R}^{n_u}$ are forward acceleration $a$ and angular acceleration $\alpha$:
$$u = \left[ \begin{array}{cc}
a &\alpha
\end{array}\right]^\top \in\mathbb{R}^{n_u}.$$ 
The robot is subject to the following system dynamics:
\begin{equation}
\frac{\mathrm{d}s}{\mathrm{d}t}
% \left[ \begin{array}{c}
% {x}\\ {y}\\ {\theta} \\ {v} \\ {\omega}
% \end{array}\right] 
= \left[ \begin{array}{ccccc}
v\cos(\theta) & v\sin(\theta) & \omega & a & \alpha
\end{array}\right]^\top
\label{eq:robot_system_dynamics}
\end{equation}
and subject to box constraints on $v$, $\omega$, $a$, $\alpha$ due to actuator limits.
The system is discretized by numerical integration:
$$s_{k+1} = \psi(s_k, u_k)\colon \mathbb{R}^{n_s}\times\mathbb{R}^{n_u}\to \mathbb{R}^{n_s}.$$
The controls within one discretization interval are constant.
Note that the discrete-time system of~\eqref{eq:robot_system_dynamics} is a special case of the more general class of nonlinear dynamics with a linear input subsystem:
\begin{subequations}
	\label{eq:linear_subsystem_whole}
	\begin{align}
	s_k &= \left[ \begin{array}{c}
	s^{\mathrm{kin}}_k \\ s^{\mathrm{lin}}_k
	\end{array}\right],\\
	s^{\mathrm{kin}}_{k+1} &= \psi^{\mathrm{kin}}(s_k),\\
	s^{\mathrm{lin}}_{k+1} &=
	{A}^{\mathrm{lin}} s^{\mathrm{lin}}_k + {B}^{\mathrm{lin}} u_k,
	\label{eq:linear_subsystem}
	\end{align}
\end{subequations}
where 
$s^{\mathrm{kin}}$ is the kinematic part (i.e., $x$, $y$, and $\theta$) and $s^{\mathrm{lin}}$ is the linear part (i.e., $v$ and $\omega$) of the state.
The matrices ${A}^{\mathrm{lin}}\in\mathbb{R}^{n_\mathrm{lin}\times n_\mathrm{lin}}$ and ${B}^{\mathrm{lin}}\in\mathbb{R}^{n_\mathrm{lin} \times n_u}$ are constant.

The theoretical results in this paper are developed for the more general class~\eqref{eq:linear_subsystem_whole}.

The robot motion is subject to process noise:
\begin{equation}
\tilde{s}_{k+1} = \psi(\tilde{s}_k, u_k) + w_k,
\label{eq:process_noise}
\end{equation}
where $\tilde{s}_k$ are the disturbed states and $w_k$ is the process noise.
\yfHighlight{Remark~\ref{remark:w-distribution} discusses the properties of $w_k$ in detail.}

A differential-drive controller is placed in the control pipeline (Fig.~\ref{fig:control_pipeline}).
It tracks the command velocities (i.e., $v^{\mathrm{cmd}}$ and $\omega^{\mathrm{cmd}}$) and 
  is modeled to reduce the deviation of the robot velocities (i.e., $v^{\mathrm{rob}}$ and $\omega^{\mathrm{rob}}$) from the velocity commands:
  \vspace{-10pt}
  \begin{subequations} \label{eq:diff_drive} \begin{align}
  	v^{\mathrm{rob}}_{k+1} &= v^{\mathrm{cmd}}_k + (v^{\mathrm{rob}}_k-v^{\mathrm{cmd}}_k)e^{\left(-{\Delta t}/{\tau}\right)},\\
  	\omega^{\mathrm{rob}}_{k+1} &= \omega^{\mathrm{cmd}}_k + (\omega^{\mathrm{rob}}_k-\omega^{\mathrm{cmd}}_k)e^{\left(-{\Delta t}/{\tau}\right)},
  	\end{align}
\end{subequations}
where $\tau\in\mathbb{R}_{> 0}$ describes the convergence rate attained by the differential-drive controller and $\Delta t\in\mathbb{R}_{> 0}$ is the discretization interval.

\subsubsection*{Collision avoidance}
The robot moves through a set of \emph{static} circular obstacles $\{o_1, o_2, \cdots o_{n_o}\}$.
Obstacle $o_j$ has a radius of $r_j^{\mathrm{obs}}$ and is centered at $(x_j^{\mathrm{obs}}, y_j^{\mathrm{obs}})$.
The exact obstacle positions and radii are considered to be known. 
The Cartesian coordinates $x_j^{\mathrm{obs}}$ and $y_j^{\mathrm{obs}}$ are in the same inertial frame as the robot pose.
We define a collision-free trajectory as one where the robot does not collide with any obstacle at any time instant. 
We enforce collision avoidance constraints at $t_k$ for all $k\in\mathcal{I}_{\left[0, N\right]}$.
For all $j\in\mathcal{I}_{\left[1,n_o  \right] }$, we impose
\begin{equation}
\sqrt{{(x_k\! -\! x_j^{\mathrm{obs}})}^2 \!+\!{(y_k\! -\! y_j^{\mathrm{obs}})}^2} \geq r_j^{\mathrm{obs}} + r.
\label{eq:nominal_collision_cstr}
\end{equation}

The set of states that satisfy~\eqref{eq:nominal_collision_cstr} and the velocity constraints is denoted as $\mathcal{S}\subset \mathbb{R}^{n_s}$.
The set of inputs that satisfy the acceleration constraints is denoted as $\mathcal{U}\subset \mathbb{R}^{n_u}$.

\subsubsection*{Reference trajectory tracking}
We assume that the state and input reference trajectories are given:
\begin{equation}\label{eq:reference_trajectory}
		s^{\mathrm{ref}}(t)\colon\mathbb{R}_+\to \mathbb{R}^{n_s}~\text{and}~
		u^{\mathrm{ref}}(t)\colon\mathbb{R}_+\to \mathbb{R}^{n_u}.
\end{equation}
The reference trajectories are furthermore assumed to be consistent with the system dynamics~\eqref{eq:robot_system_dynamics}.
The stage and terminal costs of the optimal control problem (OCP), assuming no disturbance, are the sum of the weighted squared tracking error over the prediction horizon:
\begin{subequations}
	\label{eq:stage_cost}
	\begin{align}
	l_k(s_k, u_k;t)\coloneqq 
	  & \|u_k\!-\!u^{\mathrm{ref}}_{t+k}\|_R^2 + \|s_k\!-\!s^{\mathrm{ref}}_{t+k}\|_Q^2, \\
	l_f(s_N;t)\coloneqq 
	& \|s_N\!-\!s^{\mathrm{ref}}_{t+N}\|_{Q_e}^2.
	\end{align}
\end{subequations}
For compactness of notation, we define $z \coloneqq \left[s_0, u_0, \cdots, s_{N-1}, u_{N-1}, s_N\right]$.
Given the current robot state $\bar{s}_{0|t}$,
  the optimal value function is
\begin{mini}|s|
	{z}{l_f(s_N;t) + \sum_{k=0}^{N-1} l_k(s_k, u_k;t)}
	{\label{eq:opt_obj_func}}{V_{N}(\bar{s}_{0|t}, t)\coloneqq}
	\addConstraint{s_0}{\! = \! \bar{s}_{0|t}}{}
	\addConstraint{s_{k+1}}{\! = \! \psi(s_k, u_k),~}{k\in\mathcal{I}_{\left[0, N-1 \right]}}
	\addConstraint{s_k}{\! \in \! \mathcal{S},}{k\in\mathcal{I}_{\left[0, N \right]}}
	\addConstraint{u_k}{\! \in \! \mathcal{U},}{k\in\mathcal{I}_{\left[0, N-1 \right]}}.
\end{mini}

\begin{aim}
	\label{porb}
	Given a mobile robot~\eqref{eq:robot_system_dynamics} with the control pipeline (Fig.~\ref{fig:control_pipeline})
	  and a reference trajectory~\eqref{eq:reference_trajectory},
	  design a robust MPC-based controller that achieves the following:
	\begin{enumerate}[\hspace{3mm}(a)]
		\item Robust collision avoidance: 
		  The robot motion satisfies collision avoidance constraints~\eqref{eq:nominal_collision_cstr} in the presence of bounded 
		  % independent and identically distributed (i.i.d.) 
		  process noise~\eqref{eq:process_noise}.
		\item Convergence to the reference trajectory in the absence of obstacles.\label{prob:convergence}
		\item Real-time feasible computation times on typical robot computing platforms.
	\end{enumerate}
\end{aim}

\section{Preliminary}
In this section, we briefly review the robust MPC formulation and the zoRO method.
\label{sec:preliminary}

\subsection{Robust MPC with Ellipsoidal Sets}
\label{sec:preliminary_RMPC}

Robust MPC aims to robustify the optimal control for systems subject to bounded disturbance.
Following~\cite{Houska2011}, we approximate the disturbance bounds by ellipsoidal sets $\mathcal{E}(\Sigma)$.
Let $\Sigma_0 = \bar{\Sigma}_{0|t}$.
The disturbance ellipsoidal tubes are propagated using the recursion:
\begin{equation}
\begin{split}
\Sigma_{k+1} &= (A_k\! -\! B_kK)  \Sigma_k {(A_k\! -\! B_kK)}^\top + W\\
    &\coloneqq \Phi(\Sigma_k, s_k, u_k, W), \ k\in\mathcal{I}_{\left[0 , N-1 \right]},
\end{split}
\label{eq:dist_prop}
\end{equation}
where $A_k\coloneqq\frac{\partial \psi(s_k, u_k)}{\partial s_k}$ and $B_k\coloneqq\frac{\partial \psi(s_k, u_k)}{\partial u_k}$ and $W\in\mathbb{R}^{n_s \times n_s}$ is positive definite.
$K$ is a precomputed feedback gain matrix 
  that counteracts the growth of ellipsoidal tubes.

\begin{remark}
	\label{remark:w-distribution}
    Equation~\eqref{eq:dist_prop} is akin to propagation of covariance matrices, e.g., in chance-constrained MPC~\cite{feng2020inexact, Hewing2017}.
    In robust MPC, the propagation of the ellipsoidal tubes~\eqref{eq:dist_prop}
    is valid for linear systems when the entire disturbance trajectory lies in a higher dimensional ellipsoid:
    $$\left[\begin{array}{cccc}
    	w_0^\top&w_1^\top& \cdots & w_{N-1}^\top
    \end{array}  \right]^\top \in\mathcal{E}\left(\text{diag}\Big(\underbrace{W, \cdots, W}_N \Big) \right).$$
	Note that this assumption on the disturbance is different from that the disturbance at every time instant $w_k$ in a bounded set 
	    as in, e.g.,~\cite{MayneTube2011, Kohler2018, MAYNE2005219}.
\end{remark}

Consider the following robust optimal control problem with ellipsoidal sets:
\begin{mini}|s|
	{\substack{z,\\  \Sigma_{0}, \dots, \Sigma_{N}}}
	{l_f(s_N) + \sum_{k=0}^{N-1} l(u_k, s_k)}
	{\label{eq:robustMPC}}{}
	\addConstraint{s_0}{=\bar{s}_{0|t}}{}
	\addConstraint{\Sigma_0}{= \bar{\Sigma}_{0|t}}{}
	\addConstraint{s_{k+1}}{=\psi(s_k, u_k),}{k\in\mathcal{I}_{\left[0, N-1 \right]}}
	\addConstraint{\Sigma_{k+1}}{=\Phi(\Sigma_k, s_k, u_k, W), }{k\in\mathcal{I}_{\left[0, N-1 \right]}}
	\addConstraint{0}{\geq \! h_k(s_k, u_k) \!+\! \beta_k(\Sigma_k, s_k, u_k),~}{k\in\mathcal{I}_{\left[0, N-1 \right]}}
	\addConstraint{0}{\geq \! h_N(s_N) \! +\!  \beta_N(\Sigma_N, s_N)},
\end{mini}
where $h_k\colon \mathbb{R}^{n_s} \times \mathbb{R}^{n_u} \rightarrow \mathbb{R}^{n_{h_k}}$ are the stage constraints
and $h_N\colon \mathbb{R}^{n_s} \rightarrow \mathbb{R}^{n_{h_N}}$ is the terminal constraint.
The backoff terms $\beta \in \mathbb{R}^{n_{h_k}}$ account for the disturbances and keep the system away from the constraints~\cite[Section 3.2.2]{Gillis2015}:
\begin{equation}
\begin{split}
\beta_{k} &\coloneqq 
\sqrt{\nabla h_{k}(\cdot, \cdot)^\top 
	\left[\begin{array}{c}
	I\\ K
	\end{array} \right] \Sigma_k
	{\left[\begin{array}{c}
		I\\ K
		\end{array} \right]}^\top\nabla h_{k}(\cdot, \cdot)},
\\[1em]
\beta_{N} &\coloneqq \sqrt{\nabla h_{N}(s_N)^\top
	\Sigma_N
	\nabla h_{N}(s_N)}.
\end{split}
\label{eq:backoff_term}
\end{equation}

\begin{remark}
An alternative to implement robust MPC is to consider circular tubes~\cite{Kohler2018, WABERSICH2021109597}.
This approach bounds the disturbance by hyperspheres with radii $\epsilon$.
The system capability of counteracting against disturbances is represented by a scalar $\rho$ such that
$\epsilon_{k+1} = \rho \epsilon_k+ \epsilon$.
% \yfHighlight{Here, a backoff term $\beta_k$ can be computed from $\epsilon_k$}.
This approach is particularly attractive in case the disturbance set is well approximated by hyperspheres,
  but otherwise suffers from significant conservativeness due to large backoff terms.
\end{remark}

\subsection{Zero-order Robust Optimization (zoRO)}
Robust MPC problems~\eqref{eq:robustMPC} can be solved with standard OCP solvers
  by augmenting the system states with $\Sigma_k$.
However, this leads to a quadratic increase in the system variables since $\Sigma_{k}\in\mathbb{R}^{n_s\times n_s}$. 
%  which significantly slows down the speed of obtaining a solution.
%is proposed to solve robust MPC problems efficiently . % \cite{feng2020inexact}.
The zoRO method~\cite{Zanelli2021a} reduces the computational complexity by iteratively solving optimization subproblems:
\begin{mini}|s|
	{z}{l_f(s_N) + \sum_{k=0}^{N-1} l_k(u_k, s_k)}{\label{eq:robustOCP_zeroOrder}}{}
	\addConstraint{s_0}{=\bar{s}_{0|t}}{}
	\addConstraint{s_{k+1}}{=\psi(s_k, u_k),}{k\in\mathcal{I}_{\left[0, N-1 \right]}}
	\addConstraint{0}{\geq h_k(x_k, u_k) + \hat{\beta}_k,~}{k\in\mathcal{I}_{\left[0, N-1 \right]}}
	\addConstraint{0}{\geq h_N(x_N) + \hat{\beta}_N}{},
\end{mini}
where the ellipsoidal sets are eliminated from the optimization variables and the backoff terms are fixed~\cite{Zanelli2021a}.
After one subproblem is solved, the ellipsoidal sets are updated using~\eqref{eq:dist_prop}.
% The algorithm is summarized in Algorithm~\ref{alg:zo-MPC}.

\begin{remark}
	The optimization problem~\eqref{eq:robustOCP_zeroOrder} is a nonlinear programming (NLP) problem.
	It can for instance be solved using an interior-point (IP) method~\cite{Wright1997}
	  as in the numerical tests in Section~\ref{sec:opt_wo_obs}.
	An alternative approach of solving~\eqref{eq:robustOCP_zeroOrder} is to use sequential quadratic programming (SQP) and solve only one quadratic programming (QP) problem before updating the backoff terms, as in the original zoRO paper~\cite{Zanelli2021a}.
	Whether to and how often to update the backoff terms are choices of implementation.
	In~\cite{Hewing2017}, the ellipsoids are evaluated and constraints are tightened based only on the previous MPC solutions.
	The implementation details of the zoRO when running on the real robot are presented in Section~\ref{sec:implementation}.
\end{remark}
\section{Optimality and Suboptimality of zoRO}
\label{sec:subopt}

In general, zoRO converges to a suboptimal point since the backoff terms are fixed and their gradients are neglected~\cite{Zanelli2021a}.
In this section we will show however that, for our use case,
  zoRO converges to a local optimum of
  % \eqref{eq:robustOCP_zeroOrder} in fact admits the same optimal solution as 
  \eqref{eq:robustMPC} in relevant conditions.
In particular,
  this is the case 
  when the robot is sufficiently distant from other obstacles, 
     i.e., the uncertain predicted trajectory does not overlap with any obstacle.
  Convergence properties can then be inherited from other robust MPC schemes.

\subsection{Suboptimality}
In zoRO, the shape matrices $\Sigma$ of the ellipsoidal sets are eliminated from the optimization variables.
Fixing the ellipsoidal sets and backoff terms leads to inexact first-order optimality conditions. 
The gradients of the backoff terms with respect to the system states and control inputs are disregarded.
Define $g_k(z)\coloneqq \Sigma_k$, where $\Sigma_k$ depends on $z$ and $W$. 
The robustified constraints in this notation are
  $h_k(z) + \beta_k(g_k(z), z) \leq 0$.
The disregarded term from the gradient of the Lagrangian is
\begin{equation}
	\frac{\partial \mathcal{L}_{\eqref{eq:robustMPC}}}{\partial z} - \frac{\partial \mathcal{L}_{\eqref{eq:robustOCP_zeroOrder}}}{\partial z} = \sum_{k=0}^N {\left(\frac{\partial \beta_k}{\partial z} + \frac{\partial\beta_k}{\partial g_k}\frac{\partial g_k}{\partial z}\right)}^\top\mu_k,% \ k\in\mathcal{I}_{\left[0 , N \right]}, 
	\label{eq:deviation_Jacobian}
\end{equation}
where $\mathcal{L}_{\eqref{eq:robustMPC}}$ and $\mathcal{L}_{\eqref{eq:robustOCP_zeroOrder}}$ denote the Lagrangians of
  \eqref{eq:robustMPC} and \eqref{eq:robustOCP_zeroOrder}, respectively,
  and $\mu_k$ the Lagrange multiplier of the inequality constraints.
For more details about the resulting suboptimality,
  see~\cite{Zanelli2021a}.

\subsection{Optimality in the Absence of Obstacles}
\label{sec:opt_wo_obs}
The mobile robot system~\eqref{eq:robot_system_dynamics} contains a linear input subsystem~\eqref{eq:linear_subsystem_whole}.
Due to the linear input subsystem, 
  zoRO MPC is locally optimal
    when collision avoidance constraints are inactive.
This feature can be generalized to other wheeled robots that can be written to have a linear input subsystem.

\begin{assumption}
\label{asm:linear_subsystem}
The robot system is of the form~\eqref{eq:linear_subsystem_whole}.
The linear feedback law only acts on the state components in the linear input subsystem:
% The disturbance feedback matrix
\begin{equation}
\label{eq: K_Matrix}
K = \left[ 
\begin{array}{cc}
0^{n_u \times n_{\mathrm{kin}}} & K^{\mathrm{lin}}
\end{array} \right],
\end{equation}
  where $K^{\mathrm{lin}}\in\mathbb{R}^{n_u\times n_{ \mathrm{lin}}}$ is constant.
\yfHighlight{The OCP is subject to affine constraints $h_k^{\mathrm{aff}}$ on the linear input subsystem 
  and nonlinear collision avoidance constraints $h_k^{\mathrm{coll}}$}:
$$h_k(s_k, u_k) = \left[ \begin{array}{c}
h_k^{\mathrm{aff}}(s_k^{\mathrm{lin}}, u_k)\\
h_k^{\mathrm{coll}}(s_k^{\mathrm{kin}})
\end{array}\right]\leq 0.$$
\end{assumption}

\begin{theorem}
	Let Assumption~\ref{asm:linear_subsystem} hold.
	If collision avoidance constraints are inactive, 
	  zoRO converges to a solution of~\eqref{eq:robustMPC}.
	\label{thm:zo_exact_same}
\end{theorem}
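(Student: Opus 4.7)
The plan is to show that at any fixed point of the zoRO iteration, i.e., a $z^\star$ that solves \eqref{eq:robustOCP_zeroOrder} using backoff terms evaluated from the $\Sigma_k^\star$ trajectory it generates via \eqref{eq:dist_prop}, the KKT conditions of \eqref{eq:robustMPC} also hold. By \eqref{eq:deviation_Jacobian}, this reduces to proving that the sum $\sum_{k}\bigl(\tfrac{\partial \beta_k}{\partial z}+\tfrac{\partial\beta_k}{\partial g_k}\tfrac{\partial g_k}{\partial z}\bigr)^\top\mu_k$ vanishes at that point. I would split the sum according to the two constraint blocks of Assumption~\ref{asm:linear_subsystem} and handle collision and affine parts separately. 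For the collision rows, the hypothesis that $h_k^{\mathrm{coll}}$ is inactive together with complementary slackness forces $\mu_k^{\mathrm{coll}}=0$, eliminating those terms outright.

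The substance of the argument is in the affine block, where I would exploit two structural facts. First, because $h_k^{\mathrm{aff}}$ depends only on $s_k^{\mathrm{lin}}$ and $u_k$, and $K=[\,0\ \ K^{\mathrm{lin}}\,]$ by \eqref{eq: K_Matrix}, the vector $[I;K]^\top\nabla h_k^{\mathrm{aff}}$ has nonzero entries only in the $s^{\mathrm{lin}}$ coordinates, so the backoff $\beta_k^{\mathrm{aff}}$ couples to $\Sigma_k$ only through its lower-right block $\Sigma_k^{ll}$. Second, partitioning
\[
A_k-B_kK=\begin{bmatrix}\partial_{s^{\mathrm{kin}}}\psi^{\mathrm{kin}} & \partial_{s^{\mathrm{lin}}}\psi^{\mathrm{kin}}\\ 0 & A^{\mathrm{lin}}-B^{\mathrm{lin}}K^{\mathrm{lin}}\end{bmatrix}
\]
and substituting into \eqref{eq:dist_prop} shows that the $(l,l)$ block obeys the closed-loop Lyapunov recursion $\Sigma_{k+1}^{ll}=(A^{\mathrm{lin}}-B^{\mathrm{lin}}K^{\mathrm{lin}})\,\Sigma_k^{ll}\,(A^{\mathrm{lin}}-B^{\mathrm{lin}}K^{\mathrm{lin}})^\top+W^{ll}$, which is completely decoupled from $z$. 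Hence $\partial g_k^{ll}/\partial z=0$, and since $\nabla h_k^{\mathrm{aff}}$ is constant (affine constraint), the explicit partial $\partial\beta_k^{\mathrm{aff}}/\partial z$ vanishes as well. Combining, both contributions to \eqref{eq:deviation_Jacobian} are zero, so the Lagrangian gradients of \eqref{eq:robustMPC} and \eqref{eq:robustOCP_zeroOrder} coincide at $z^\star$, and $z^\star$ together with its $\Sigma_k^\star$ trajectory is a KKT point of the full robust MPC.

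The step I expect to require the most careful bookkeeping is verifying cleanly that an affine constraint on the linear input subsystem really sees $\Sigma_k$ only through $\Sigma_k^{ll}$; this uses both the sparsity pattern of $\nabla h_k^{\mathrm{aff}}$ and the specific structure of $K$ in \eqref{eq: K_Matrix}. Once that projection is in place, the block-triangular form of $A_k-B_kK$ makes the decoupling of $\Sigma^{ll}$ essentially immediate, and the remainder is complementary slackness. A minor caveat worth stating is that the theorem is a statement about fixed points and KKT stationarity: convergence of the zoRO outer loop itself (rather than optimality of its limit) is inherited from existing analyses of the scheme and is not what is being proved here.
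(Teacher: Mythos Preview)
Your proposal is correct and follows essentially the same route as the paper: partition $\Sigma_k$ and $A_k-B_kK$ into kinematic/linear blocks, observe that the lower-right block obeys a $z$-independent Lyapunov recursion so the affine-constraint backoffs have zero sensitivity to $z$, and use complementary slackness to kill the collision-constraint terms. The paper's argument is slightly terser (it does not spell out the projection of $\beta_k^{\mathrm{aff}}$ onto $\Sigma_k^{\mathrm{lin}}$ or the fixed-point framing as carefully as you do), but the logical structure is identical.
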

\begin{proof}
	Partition the ellipsoidal matrix $\Sigma_{k}$ as
	$$ \Sigma_k = \left[ \begin{array}{cc}
	\Sigma_{k}^{\mathrm{kin}} & \Sigma_{k}^{\mathrm{cpl}}\\
	\left( \Sigma_{k}^{\mathrm{cpl}}\right)^\top  & \Sigma_{k}^{\mathrm{lin}}
	\end{array}\right], $$
	where $\Sigma_{k}^{\mathrm{kin}}$ is the disturbance matrix of $s_k^{\mathrm{kin}}$
	and $\Sigma_{k}^{\mathrm{lin}}$ is that of $s_k^{\mathrm{lin}}$
	and $\Sigma_{k}^{\mathrm{cpl}}$ represents the coupling part between $s_{k}^{\mathrm{kin}}$ and $s_{k}^{\mathrm{lin}}$.
	Similarly,
	  partition the disturbance matrix $W$ and
	  \yfHighlight{we have $W^{\mathrm{lin}}$ which describes the one-step disturbance of $s^{\mathrm{lin}}$}.
	Subsequently, we partition ${A}_k$ in~\eqref{eq:dist_prop} into
	$\left[ \begin{array}{cc}
	{A}_{k}^{\mathrm{kin}} & {A}_{k}^{\mathrm{cpl}}\\
	0 & {A}^{\mathrm{lin}}
	\end{array}\right]$
	and $B_k$ into 
	$\left[ \begin{array}{c}
	{B}_{k}^{\mathrm{kin}}\\
	{B}^{\mathrm{lin}}
	\end{array}\right]$.
	Given the assumptions on the feedback gain $K$ matrix~\eqref{eq: K_Matrix},
	it is easily verifiable that
	\begin{equation*}
	A_k - B_k K = \left[ \begin{array}{cc}
	{A}_{k}^{\mathrm{kin}} & {A}_{k}^{\mathrm{cpl}} - {B}_{k}^{\mathrm{kin}}K^{\mathrm{lin}}\\
	0 & {A}^{\mathrm{lin}} - B^{\mathrm{lin}}K^{\mathrm{lin}}
	\end{array}\right].
	\end{equation*}
	%where $B^{\mathrm{lin}}\in\mathbb{R}^{n_{\mathrm{lin}} \times n_u}$ is a constant matrix.
	For brevity of notation, we denote 
	$\hat{A}^{\mathrm{lin}} \coloneqq {A}^{\mathrm{lin}} - B^{\mathrm{lin}}K^{\mathrm{lin}}$
	such that
	$$\Sigma_{k+1}^{\mathrm{lin}} = \hat{A}^{\mathrm{lin}}\Sigma_{k}^{\mathrm{lin}}\left(\hat{A}^{\mathrm{lin}}\right)^\top + W^{\mathrm{lin}},$$
	which is independent of the optimization variables $s_k$ and $u_k$.
	As the constraints on the linear input subsystem (i.e., $h_k^{\mathrm{aff}}(s_k^{\mathrm{lin}}, u_k)\leq 0$) are affine,
	  we can conclude that the gradients of the backoff terms of these constraints over the states and control inputs are zero. 
	
	Moreover, since the collision avoidance constraints (i.e., 
	$h_k^{\mathrm{coll}}(s_k^{\mathrm{kin}})\leq 0, \forall k\in\mathcal{I}_{[0, N]}$) are inactive,
	  the corresponding Lagrange multipliers are zero.
	Therefore, the disregarded gradients~\eqref{eq:deviation_Jacobian} of both affine constraints and collision avoidance constraints are zero.
	The subproblem \eqref{eq:robustOCP_zeroOrder} has the same first-order optimality conditions in terms of the states and control inputs as \eqref{eq:robustMPC}. 
	% The zero-order robust MPC has the optimal solution.
\end{proof}

\begin{figure*}[tb]
	\centering
	\vspace{2pt}
	\begin{minipage}[t]{0.34\textwidth}
		\centering
		\includegraphics[width=0.82\columnwidth]{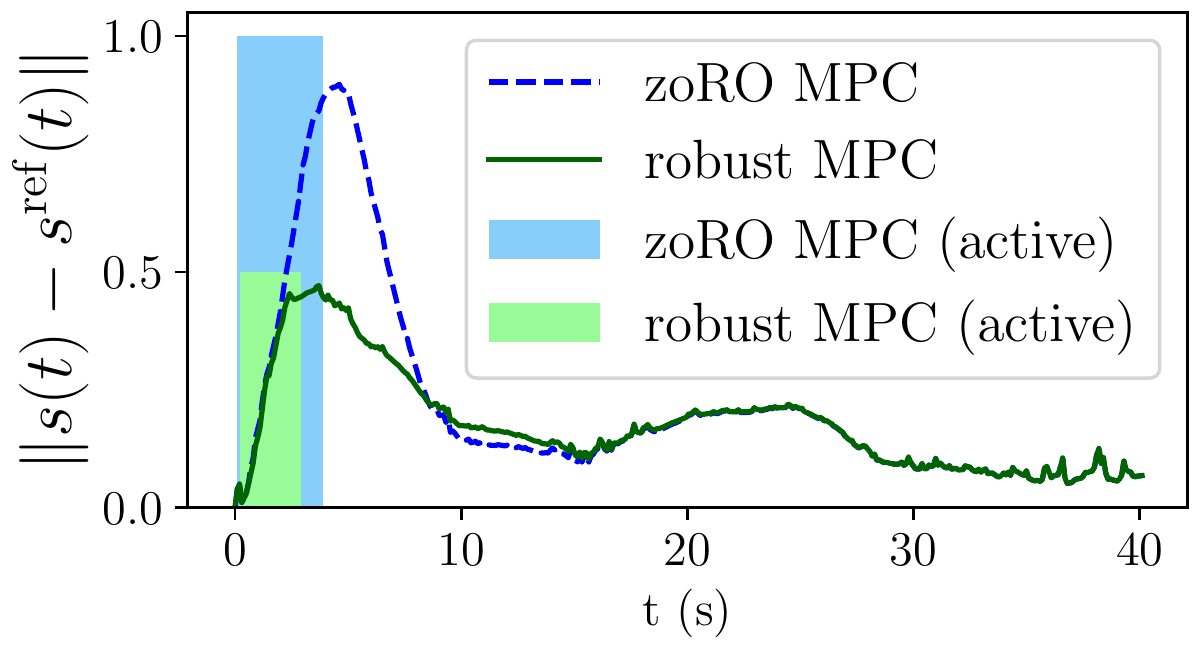}
		\caption{
			Tracking errors of the robust MPC~\eqref{eq:robustMPC} and those of zoRO~\eqref{eq:robustOCP_zeroOrder}.
			The boxes indicate that at least one collision avoidance constraint is active for each of the MPC approaches.
		}
		\label{fig:res_convergence}
	\end{minipage}
	\hspace{0.02\textwidth}
	\begin{minipage}[t]{0.34\textwidth}
		\centering
		\includegraphics[width=\columnwidth]{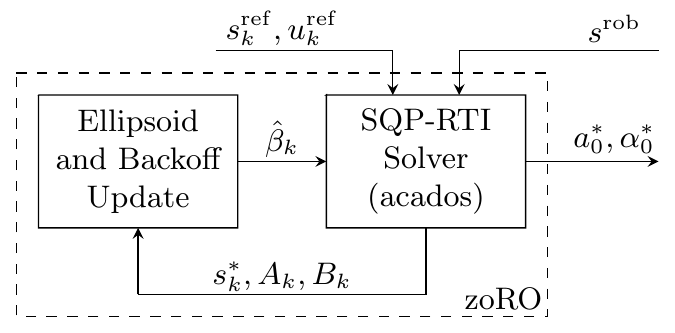}
		\caption{\yfHighlight{Diagram of the zoRO algorithm. 
			The updated backoff terms $\hat{\beta}_k$ are passed to the SQP-RTI solver, 
			which performs one QP iteration based on the robot state $s^{\mathrm{rob}}$ and the reference trajectories $s^{\mathrm{ref}}_k$ and $u^{\mathrm{ref}}_k$.}}
		\label{fig:diagram_RMPC}
	\end{minipage}
	\hspace{0.02\textwidth}
	\begin{minipage}[t]{0.25\textwidth}
		\centering
		\includegraphics[width=0.9\columnwidth]{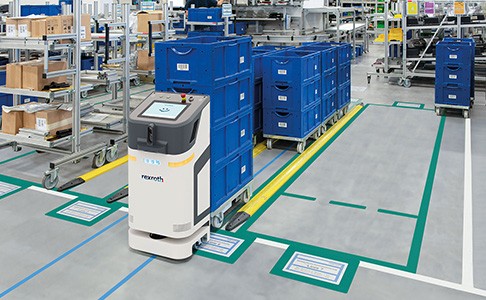}
		\caption{The ActiveShuttle AMR by Bosch Rexroth. The top-view robot shape is outer-approximated by a circle.}
		\label{fig:BoschActiveShuttle}
	\end{minipage}
\end{figure*}

To numerically validate Theorem~\ref{thm:zo_exact_same}, 
  we solve Problem~\ref{porb} with the robust MPC~\eqref{eq:robustMPC} and zoRO MPC~\eqref{eq:robustOCP_zeroOrder}
  by using the primal-dual interior point optimizer (IPOPT)~\cite{Waechter2005} interfaced through CasADi~\cite{Andersson2018}.
\yfHighlight{When collision avoidance constraints are inactive,
  the difference between the first control input is below $10^{-7}$, which can be interpreted as numerical noise.}

\begin{corollary}
	\label{cor:same_convergence}
	Let Assumption~\ref{asm:linear_subsystem} hold and collision avoidance constraints be inactive.
	The robust trajectory tracking problem solved by the zoRO
	  MPC~\eqref{eq:robustOCP_zeroOrder} has the same property of convergence to the reference trajectories as that solved by the robust MPC~\eqref{eq:robustMPC}.
\end{corollary}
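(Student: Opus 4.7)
The plan is to obtain the corollary as an almost immediate consequence of Theorem~\ref{thm:zo_exact_same}. The core observation is that closed-loop convergence of an MPC scheme is determined entirely by the sequence of first control inputs it produces along the closed-loop trajectory, so if the two MPC formulations yield the same first control input at every sampling instant, they induce identical closed-loop evolutions and hence share all convergence properties.

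Concretely, I would proceed in three steps. First, I would invoke Theorem~\ref{thm:zo_exact_same}: under Assumption~\ref{asm:linear_subsystem} and with the collision constraints $h_k^{\mathrm{coll}}$ inactive, the KKT conditions of the zoRO subproblem~\eqref{eq:robustOCP_zeroOrder} in $(s,u)$ agree with those of the full robust MPC~\eqref{eq:robustMPC}. I would then argue that, under standard regularity (e.g., LICQ and SSOSC, which I would state explicitly as needed), a zoRO fixed point is a local minimizer of~\eqref{eq:robustMPC}, so in particular the applied first input $u_0^\star$ and the next predicted state $s_1^\star$ coincide for the two schemes whenever they are initialized with the same $\bar{s}_{0|t}$ and $\bar{\Sigma}_{0|t}$. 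Second, I would propagate this equality along the closed loop: assuming both controllers see the same realized disturbance $w_k$ at each time step, an induction on $t$ shows that the closed-loop state sequence $\bar{s}_{0|t}$ (and the ellipsoidal shape matrix $\bar{\Sigma}_{0|t}$, which evolves according to~\eqref{eq:dist_prop} independently of $s,u$ for the linear subsystem) is identical for the two schemes, so long as the assumed inactivity of collision constraints persists. Third, I would conclude that any convergence result established for~\eqref{eq:robustMPC} in the obstacle-free setting (e.g., asymptotic tracking of $s^{\mathrm{ref}},u^{\mathrm{ref}}$ inherited from standard tube-MPC theory) transfers verbatim to the zoRO controller.

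The main obstacle, and the only part that is not entirely mechanical, is the induction step: one must ensure that the collision-inactivity hypothesis is maintained along the closed-loop trajectory, so that Theorem~\ref{thm:zo_exact_same} applies at every time step rather than just at $t=0$. In the obstacle-free setting stated by the corollary this is trivial because $h_k^{\mathrm{coll}}$ is vacuous; if one wanted a more general statement (``sufficiently far from obstacles''), one would need to combine the induction with a forward-invariance argument showing that the robustified predicted tube stays separated from every obstacle. I would mention this scope limitation briefly but not pursue it, since the corollary as written only claims equivalence of convergence properties, not an independent convergence proof. A short closing sentence would note that, as a result, any region of attraction and rate of convergence guaranteed for~\eqref{eq:robustMPC} is automatically inherited by the zoRO MPC~\eqref{eq:robustOCP_zeroOrder}.
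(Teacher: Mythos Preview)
Your proposal is correct and takes essentially the same approach as the paper: the corollary is treated as an immediate consequence of Theorem~\ref{thm:zo_exact_same}, since identical optimizers at every sampling instant yield identical closed-loop trajectories and hence identical convergence behavior. In fact the paper does not spell out a proof at all---it simply states the corollary, points to the literature for the underlying robust-MPC convergence results, and validates the claim numerically---so your three-step argument is already more explicit than what the paper provides.
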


The proof of convergence to reachable dynamic reference trajectories 
    under the closed-loop MPC 
    for nonlinear systems in the absence of disturbance can be found in~\cite[Theorem~2]{Kohler2019}.
It is proven in~\cite[Theorem~8]{Kohler2018} that nonlinear systems subject to bounded disturbance can be stabilized at the origin if the system is locally incrementally stabilizable.

Here we show that the robot trajectories robustly converge to the reference trajectories in practice.
In simulation, we drive the robot along a reference trajectory with the closed-loop control of the robust MPC~\eqref{eq:robustMPC} and the zoRO MPC~\eqref{eq:robustOCP_zeroOrder}.
After the robot passes by the obstacle, the tracking error reduces, gets into a bounded range, and stays within the range (Fig.~\ref{fig:res_convergence}).
Moreover, it is remarkable that indeed the tracking errors of the robust MPC and the zoRO MPC are almost the same 
  when the collision avoidance constraints are inactive.
Corollary~\ref{cor:same_convergence} is supported by numerical tests.
\section{Implementation and Real-world Experiment Results}
\label{sec:experiment}

In this section we first discuss some implementation details of the zoRO MPC.
Subsequently,
  we evaluate results from real-world experiments
  in terms of robustness, planning efficiency, and computational efficiency.

\subsection{Implementation}
\label{sec:implementation}
\subsubsection*{Linear Feedback Gain Matrix}
As mentioned in Section~\ref{sec:preliminary_RMPC},
  linear disturbance feedback laws are often incorporated in the robust MPC framework to counteract the growth of the tubes.
In our implementation, 
  we take the differential drive controller in Fig.~\ref{fig:control_pipeline}
    as the disturbance feedback in the robust MPC.
Given the model of the differential drive controller~\eqref{eq:diff_drive},
  we compute the constant feedback gain matrix $K$ such that
\begin{equation*}
B^{\mathrm{lin}}K^{\mathrm{lin}} = \left[ 
\begin{array}{cc}
1-\exp\big(-{\Delta t}/{\tau}\big)&0\\
0&1-\exp\big(-{\Delta t}/{\tau}\big)
\end{array}
\right].
\end{equation*}
We consider that the feasible set of accelerations of the differential-drive robot is strictly and sufficiently larger than
  the bounds considered in the robust MPC formulation to allow omitting the backoff terms on the accelerations.
This latter aspect is driven by practical considerations of our experimental platform and admittedly a source of
  conservativeness.
\looseness=-1
  
\subsubsection*{Zero-order robust optimization (Fig.~\ref{fig:diagram_RMPC})}

At every sampling time, the current robot state and the reference trajectories are passed to zoRO.
We first compute the fixed backoff terms
  based on the solution obtained at the previous sampling time. 
After solving~\eqref{eq:robustOCP_zeroOrder} with one QP iteration,
  we update the backoff terms using the latest solutions $s^*_k$ and the corresponding \yfHighlight{Jacobian matrices} $A_k, B_k$.
The updated problem is solved with another QP iteration and 
  the zoRO outputs the first optimal control inputs $a^*_0, \alpha^*_0$.
The optimization problem is solved using the SQP real-time iteration (RTI) in acados~\cite{Verschueren2021} and the QP solver HPIPM~\cite{Frison2020a}.
To compensate for the computational delay,
  we set the initial robot state $\bar{s}_0$ to be the simulated state 
  of the robot at the time we expect the currently computed input to be applied.
The measurement noise is not in the scope of the discussion and it is assumed to be zero. 
The initial disturbance matrix $\bar{\Sigma}_{0|t}$ is chosen 
  to bound the system process noise in the duration of the computational delay.

\subsection{Hardware and Software Setup}

\begin{figure*}[tb]
	\centering
	\vspace{0.07in}
	\begin{subfigure}[b]{0.45\columnwidth}
		\includegraphics[width=1\linewidth]{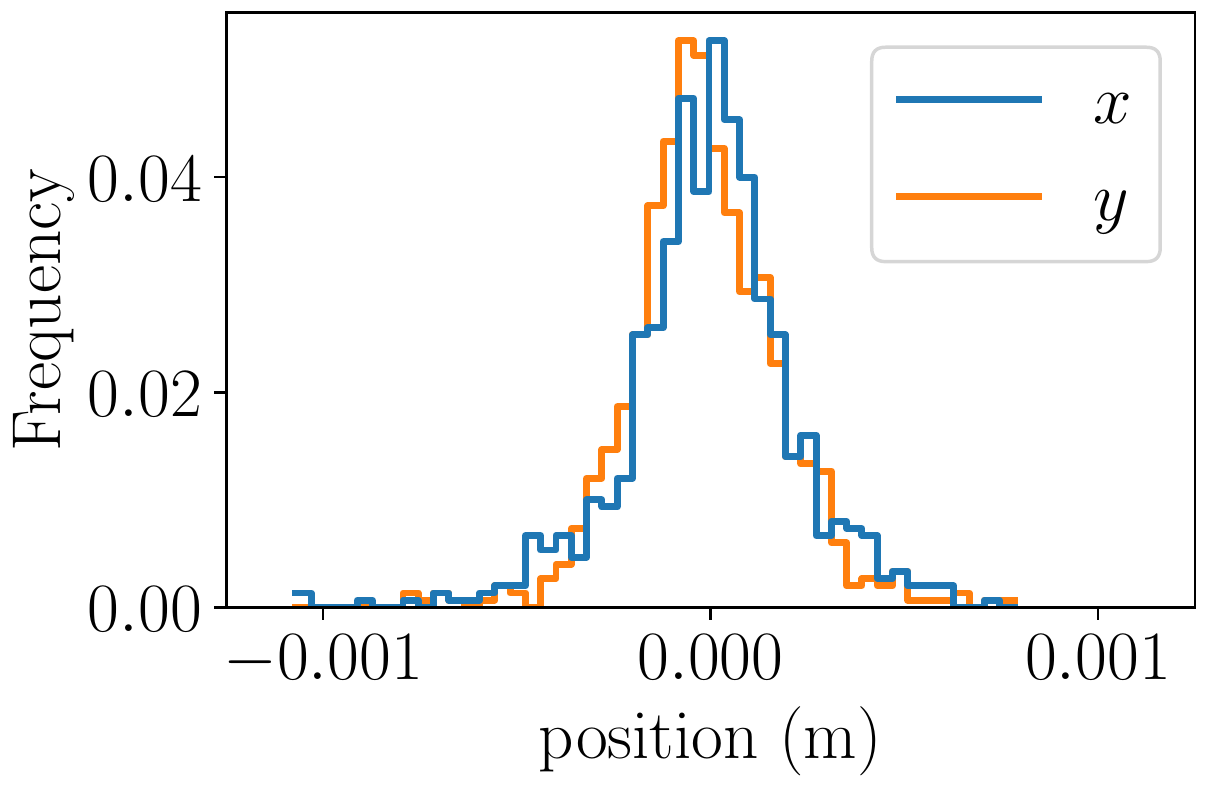}
	\end{subfigure}
    \hspace{0.02\textwidth}
	\begin{subfigure}[b]{0.45\columnwidth}
		\includegraphics[width=1\linewidth]{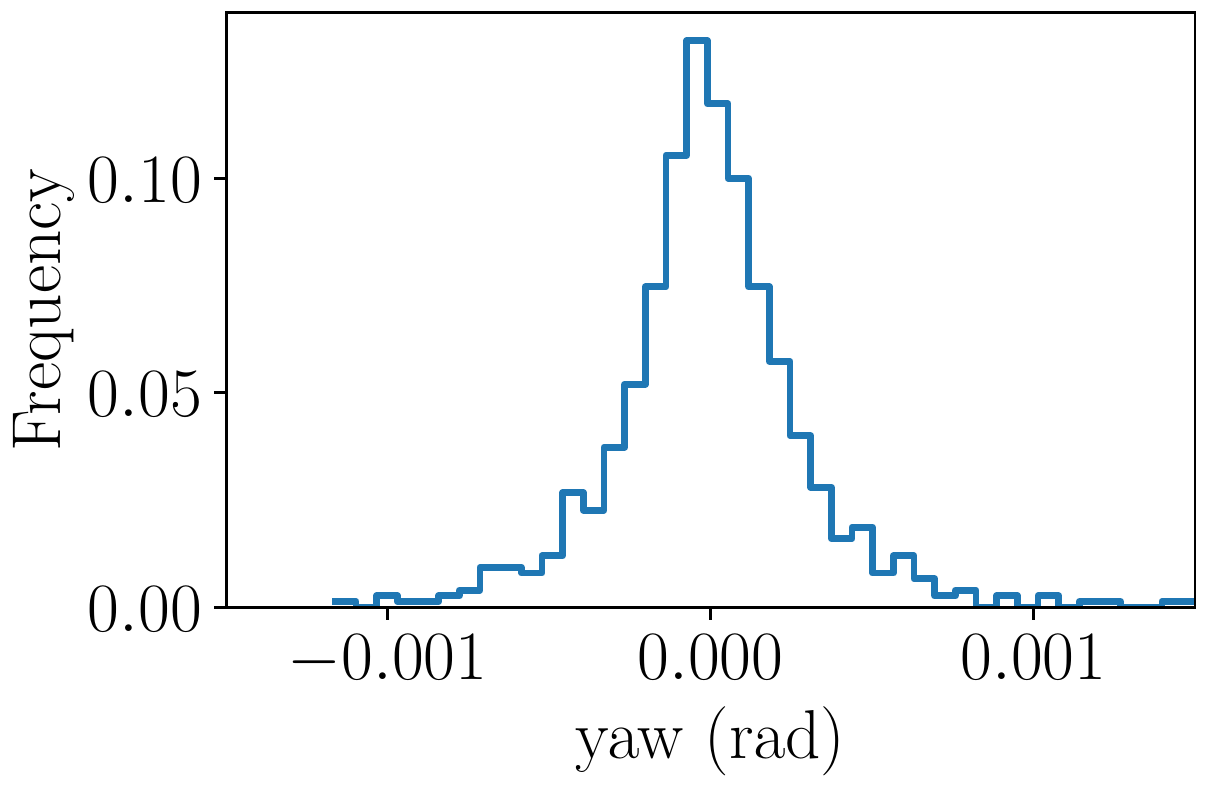}
	\end{subfigure}
	\hspace{0.02\textwidth}
	\begin{subfigure}[b]{0.45\columnwidth}
		\includegraphics[width=1\linewidth]{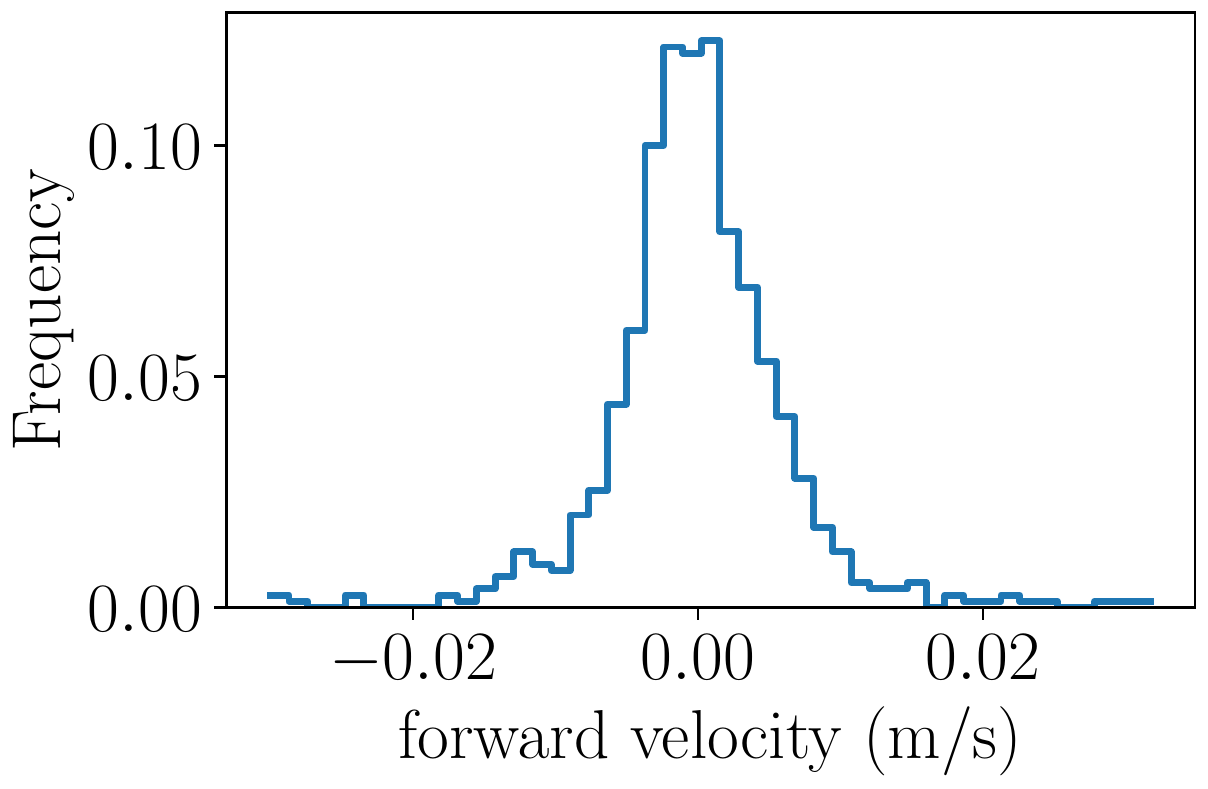}
	\end{subfigure}
    \hspace{0.02\textwidth}
	\begin{subfigure}[b]{0.45\columnwidth}
		\includegraphics[width=1\linewidth]{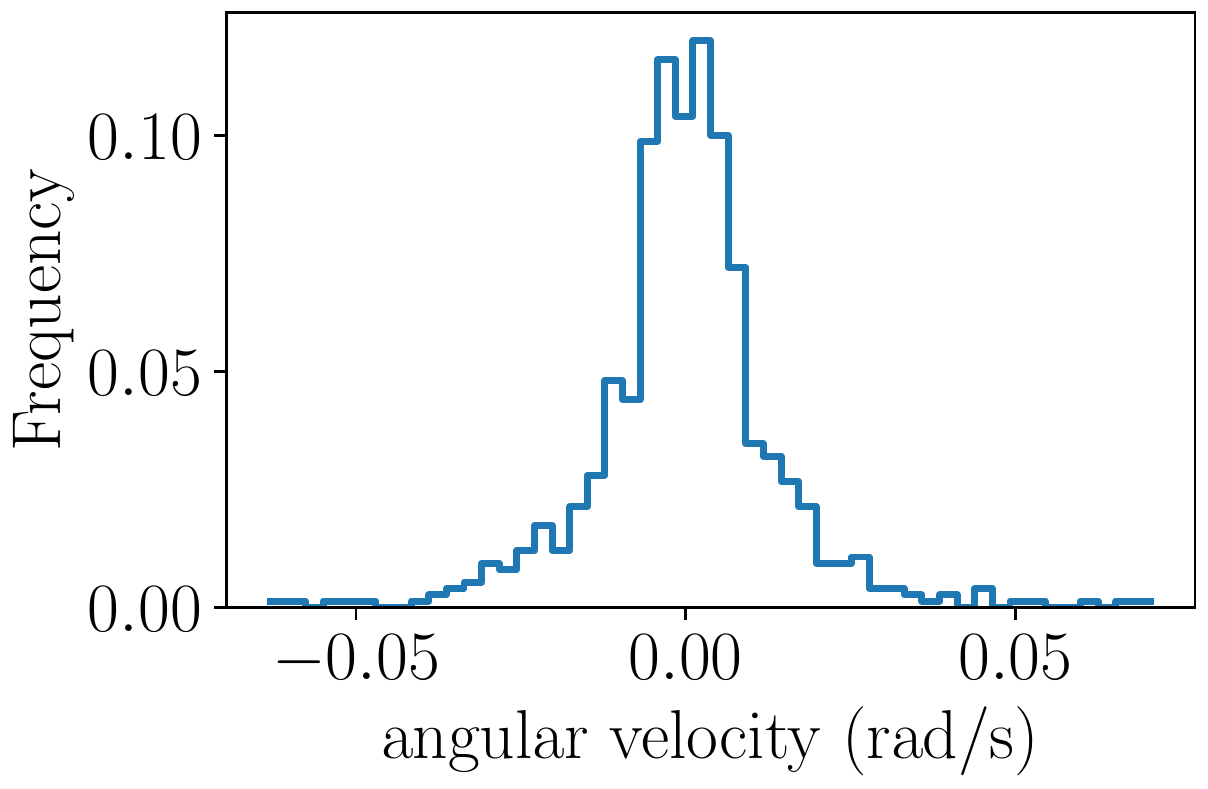}
	\end{subfigure}
	\caption{The process noise measured with the motion capture system.
		\yfHighlight{Note that the process noise on $v$ and $\omega$ is one magnitude larger than on $x$, $y$, and $\theta$.}
	}
	\label{fig:measured_dist}
\end{figure*}

\begin{figure}[tb]
	\centering
	\begin{minipage}[t]{0.23\textwidth}
		\centering
		\includegraphics[width=\columnwidth]{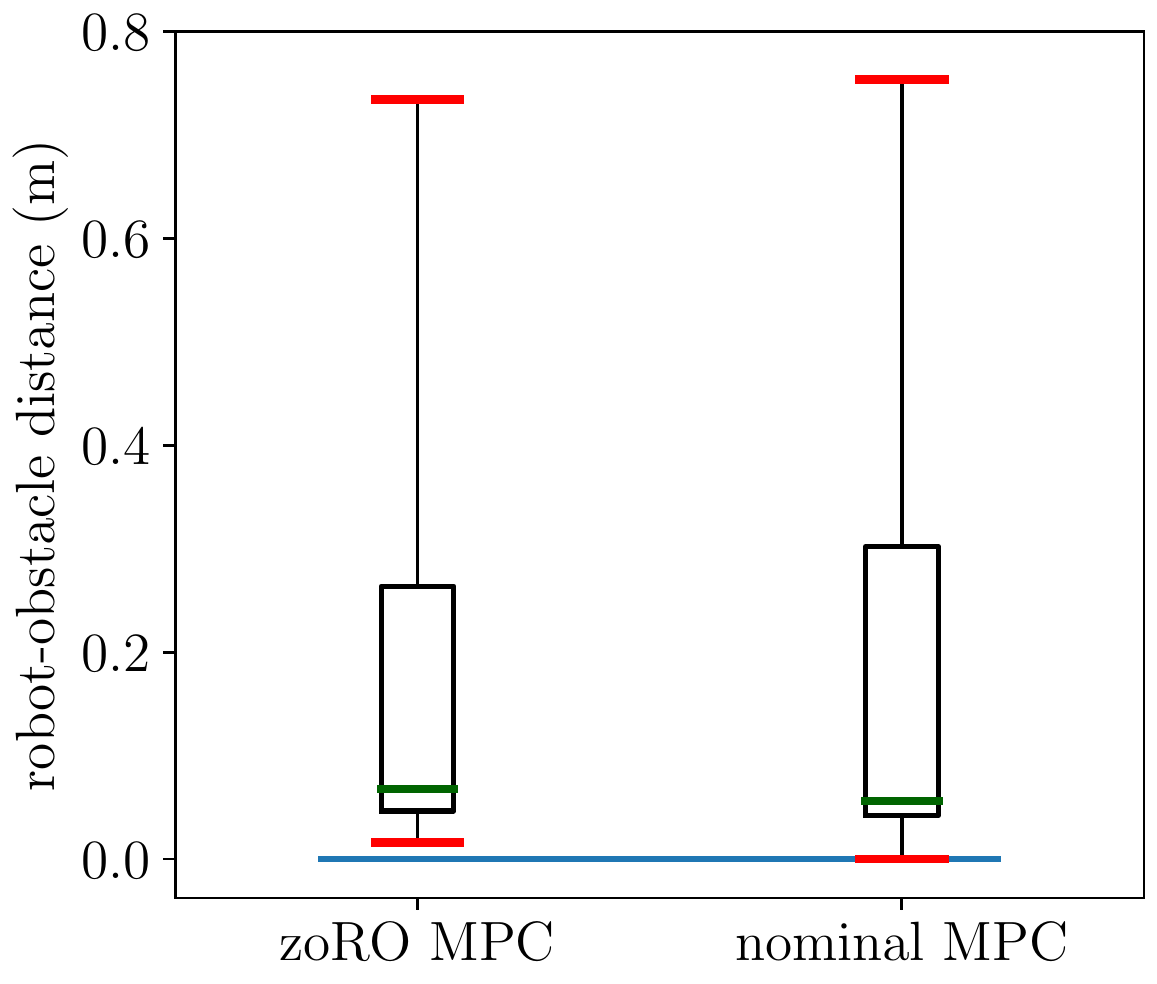}
		\caption{Distances between robot and obstacles while some collision avoidance constraints are active.
		}
		\label{fig:min_distance2obs}
	\end{minipage}
    \hspace{0.01\textwidth}
	\begin{minipage}[t]{0.23\textwidth}
		\centering
		\includegraphics[width=0.96\columnwidth]{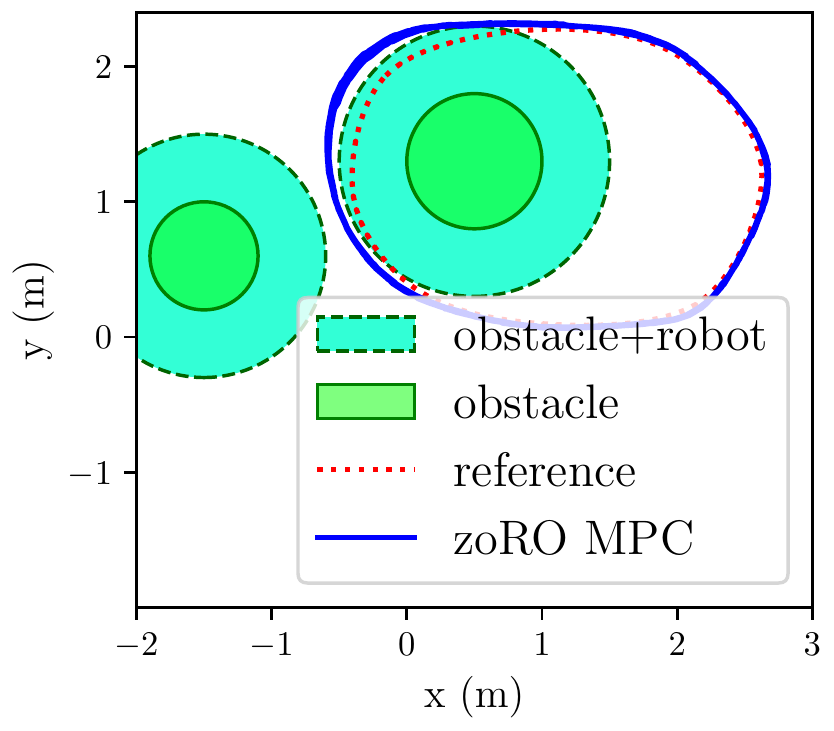}
		\caption{Robot trajectory \yfHighlight{(depicting nine cycles)} and obstacles.}
		\label{fig:ellipsoid_vs_scalar}
	\end{minipage}
\end{figure}

The experiment is carried out on a differential-drive ActiveShuttle by Bosch Rexroth (Fig. \ref{fig:BoschActiveShuttle}).
The robot shape is outer-approximated by a circle of radius 0.5~meters.
The ActiveShuttle is equipped with a quad-core Intel i5-7300U CPU and 8 GB RAM.
We run ROS2~\cite{scirobotics_abm6074} and Nav2~\cite{macenski2020marathon2} on the robot. 
The collision-free zoRO MPC is called every 50~milliseconds.

\subsubsection*{Reference Trajectory}
We consider a closed path along which a Nav2 controller guides the robot.
Two virtual obstacles are placed close to the path.
The Nav2 controller (a.k.a. local planner) is unaware of them.
Consequently, the robot would crash into these virtual obstacles
  if it followed the Nav2 control commands.
We take these `unsafe' trajectories generated by the Nav2 controller as reference trajectories.
Note that the reference trajectories continuously change over time in this particular setup.

\subsection{Bounds of Process Noise}
\label{sec:dist_est}

We estimate the bounds of one-step process noise on the real robot.
First, we get the robot 3D poses by the OptiTrack motion capture system and project the tracked 3D poses onto the 2D plane.
Then, we estimate the linear and angular velocities by solving a least-squares problem based on the robot kinematics.
Taking the tracked 2D poses and the computed velocities as the ground truth,
  we measure the process noise by taking the difference between the numerical integration $\psi(s_k, u_k)$ and the robot state at time $t_{k+1}$.
The measured process noise is plotted in Fig.~\ref{fig:measured_dist}.
It can be observed that the process noise in robot positions and headings is on a much smaller scale than that in forward and angular velocities.
This is because the linear input subsystem is prone to disturbance including computational delay and external forces acting on the mobile robot.
The values at three standard deviations are set as the bounds of process noise.
\looseness=-1

\subsection{Experimental Results}
\subsubsection{Robustness}
To test the system robustness,
  we record the minimal distance of the mobile robot to the virtual obstacles when collision avoidance constraints are active (Fig.~\ref{fig:min_distance2obs}).
\yfHighlight{The body of the boxplot contains the recorded data between the lower and the upper quartile
  and the green line represents the median.}
As shown in the figure, the minimal distance when running the zoRO MPC is slightly over zero.
In contrast, the nominal MPC results in a distance slightly less than zero.
\yfHighlight{That means that the nominal MPC led to a collision due to process noise.}

\subsubsection{Robot Motion Efficiency}

In order to evaluate the robot motion efficiency,
  we place two obstacles on the two sides of the reference path.
As ellipsoidal sets tightly bound the potential disturbance,
  the backoff terms for collision avoidance are not unnecessarily big.
Thus, the zoRO MPC with ellipsoidal sets successfully drives the robot through the two obstacles, as shown in Fig.~\ref{fig:ellipsoid_vs_scalar}.

\subsubsection{Computational Efficiency}
\begin{figure}[thb]
	\centering
	\includegraphics[width=.7\columnwidth]{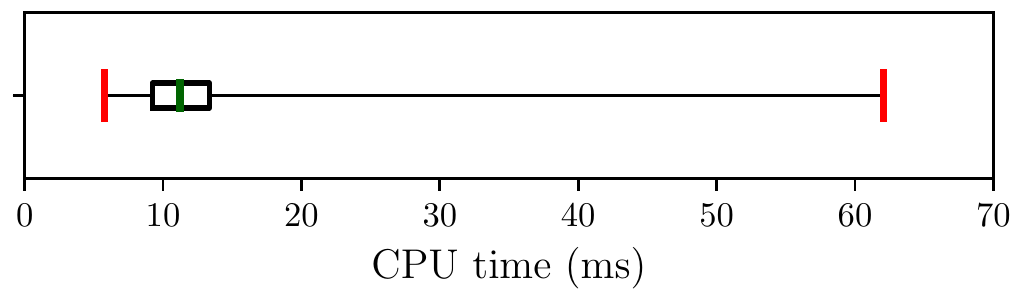}
	\caption{CPU time of solving one zoRO MPC problem.
	\yfHighlight{The visualized are the minimal, the lower quartile, the median, the upper quartile and the maximum.}}
	\label{fig:rmpc_elapsed_time}
\end{figure}
Fig.~\ref{fig:rmpc_elapsed_time} reports the CPU time of solving one robust MPC problem.
The median and the maximum of the running time are 12 milliseconds and 62 milliseconds respectively.

\section{CONCLUSIONS}
\label{sec:conclusion}

This paper presents a tailored implementation of the zero-order robust optimization-based approach from~\cite{Zanelli2021a} 
  to collision-free reference trajectory tracking.
Given a mobile robot and a reference trajectory,
  we have devised a controller which achieves robust collision avoidance, 
  robust convergence to the reference trajectory in absence of obstacles, 
  and real-time feasible computation time on typical robot computing platforms.
The effectiveness of the module has been demonstrated on the real industrial mobile robot.
Future work might focus on strict real-time capabilities and the extension to non-static obstacles.

%\addtolength{\textheight}{+1cm}

\bibliographystyle{./IEEEtran} 
\bibliography{refs}

\end{document}